\def\qedbox#1#2{\vbox{\hrule height.2pt
  \hbox{\vrule width.2pt height#2pt \kern#1pt \vrule width.2pt}
  \hrule height.2pt}}
\def\qed{\hfill \quad\qedbox46\newline\smallbreak}
\def\s#1{\mbox{\boldmath $#1$}}
\def\+{\!+\!}
\def\-{\!-\!}
\def\itbf#1{\textit{\textbf{#1}}}
\def\bcontinue{{\bf continue}}
\def\pref(#1,#2){$#1$ is a prefix of $#2$}
\def\suff(#1,#2){$#1$ is a suffix of $#2$}
\def\reg(#1,#2){$#2$ is $#1$-regular}
\def\notreg(#1,#2){$#2$ is not $#1$-regular}
\def\UPDATE\_F{\tt{UPDATE\_F}}
\def\EL{\mathcal L}
\def\EC{\mathcal C}
\algnewcommand{\LineComment}[1]{\State \(\triangleright\) \normalfont{\sl #1}}
\begin{document} 
\pagestyle{headings}
 \title{\normalsize 
 String Comparison in $V$-Order: \\ New Lexicographic Properties \& On-line
 Applications}

\author{
Ali Alatabbi\inst{1}
\and
Jacqueline W.\ Daykin\inst{1,2}
\and 
M.\ Sohel Rahman\inst{3}   
\and\\
W.\ F.\ Smyth\inst{1,4,5}\thanks{This work was supported in part by the
Natural Sciences \& Engineering Research Council of Canada.}}

\institute{Department of Informatics, King's College London, UK\\
\email{ali.alatabbi@kcl.ac.uk, jackie.daykin@kcl.ac.uk}
\and Department of Computer Science\\
Royal Holloway College, University of London, UK\\
\email{J.Daykin@cs.rhul.ac.uk}
\and 
A$\ell$EDA Group, Department of CSE, BUET, Dhaka-1000, Bangladesh\\
\email{msrahman@cse.buet.ac.bd}
\and Algorithms Research Group, Department of Computing \& Software\\
McMaster University, Canada\\
\email{smyth@mcmaster.ca}
\and School of Engineering \& Information Technology\\
Murdoch University, Western Australia}

\maketitle

\begin{abstract} 
$V$-order is a global order on strings related to Unique Maximal
Factorization Families (UMFFs) \cite{DDS11,DDS13}, which are themselves
generalizations of Lyndon words \cite{L83}. $V$-order has recently been
proposed as an alternative to lexicographical order in the computation of
suffix arrays and in the suffix-sorting induced by the Burrows-Wheeler
transform. Efficient $V$-ordering of strings thus becomes a matter of
considerable interest. In this paper we present new and surprising results 
on $V$-order in strings, then go on to explore the algorithmic consequences.  
\end{abstract}

\section{Introduction}\label{sect-intro}

This paper extends current knowledge on the non-lexicographic string
ordering technique known as $V$-order \cite{DaD96}. New combinatorial
insights are obtained which are linked to computational settings. In
particular, we relate $V$-order string comparison to lexicographic by
showing how it is possible to traverse the strings from left to right,
respectively right to left, at each stage determining in $O(1)$ time the
order of prefixes, respectively suffixes.  This improves on existing
ordering algorithms \cite{DDS11, ADRS14, ADRS14FI} in various ways: it
removes any dependence on an ``indexed" alphabet, it orders prefixes and
suffixes in addition to the original strings, and it reduces dependence on
additional data structures. Furthermore, we introduce an input-sensitive
variant for $V$-order comparison.

Regarding practical applications of $V$-order, in \cite{DS14} a novel
variant of the classic lexicographic Burrows-Wheeler transform, the
$V$-transform ($V$-BWT), was introduced which was based on $V$-order --
instances of enhanced data clustering were demonstrated. Linear $V$-sorting
of all the rotations of a string $\s{x} = \s{x}[1 \ldots n]$, as required
for an efficient transform, was achieved by linear time and space $V$-order
string comparison (Daykin et al. 2011) \cite{DDS11} along with $\Theta (n)$
suffix-sorting (Ko and Aluru, 2003) \cite{KA03}. Lyndon-like factorization
of a string into $V$-words is likewise  linear in time and space
\cite{DDS11}. For $V$-words, \cite{DS14} showed how to compute the
$V$-transform in $\Theta (n)$ time and space; in addition, inverting the
$V$-transform to recover the input $V$-word was achieved in time $O(n^2
\log k')$, using $O(n+k')$ additional storage, where $k'$ is the number of
sequences of largest letters in \s{x}. A bijective algorithm was also
outlined in the case that \s{x} is arbitrary.

We apply the new combinatorial insights gained to modify ideas given in
\cite{MRRS14} for Lyndon factorizations, suffix arrays and the Burrows
Wheeler transform, to similarly obtain on-line processing for $V$-order.

\section{Preliminaries}\label{sect-prelim}

Consider a finite totally ordered alphabet $\Sigma$ which consists of a set
of characters (equivalently letters or symbols) with cardinality
$|\Sigma|$.
A string is a sequence of zero or more characters over $\Sigma$.
A string $\s{s}$ of length $|\s{s}|=n$ is represented by $\s{s}[1\ldots
n]$, where $\s{s}[i] \in \Sigma $ for $ 1\leq i \leq n $.
The set of all non-empty strings over the alphabet $\Sigma$ is denoted by
$\Sigma^+$. The empty string with zero length is denoted by $\epsilon$, with
$\Sigma^* = \Sigma^+ \cup \epsilon$; A string $\s{w}$ is a substring, or
factor, of $\s{s}$ if $\s{s} = \s{u}\s{w}\s{v}$, where $\s{u},\s{v} \in
\Sigma^{\ast}$. Words $\s{w}[1\ldots i]$ are prefixes of $\s{w}$, and words
$\s{w}[i \ldots n]$ are suffixes of $\s{w}$. For further stringological
definitions, theory and algorithmics see \cite{CR02}.

Some of our applications are derived from Lyndon words, which we now
introduce. A string $\s{y}=\s{y}[1\ldots n]$ is a \emph{conjugate} (or
cyclic rotation) of $\s{x}=\s{x}[1 \ldots n]$ if $\s{y}[1 \ldots n] =
\s{x}[i \ldots n]\s{x}[1\ldots i-1]$ for some $1 \leq i \leq n$ (for $i
=1, ~\s{y} =\s{x}$).
A {\itshape Lyndon word} is a primitive word which is minimal for the
lexicographical order (lexorder) of its conjugacy class. 

\begin{theorem}\label{lLyndon-thrm}
\cite{CFL58} Any word $\s{w}$ can be written uniquely as a non-increasing
product $\s{w} = \s{u}_1 \s{u}_2 \cdots \s{u}_k$ of Lyndon words.
\end{theorem}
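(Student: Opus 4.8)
The plan is to treat existence and uniqueness separately, with both resting on a single concatenation lemma for Lyndon words. The lemma I would first establish is: if $\s{u}$ and $\s{v}$ are Lyndon words with $\s{u} < \s{v}$ in lexorder, then the product $\s{u}\s{v}$ is again a Lyndon word. I would prove this by showing that $\s{u}\s{v}$ is strictly smaller (lexicographically) than each of its proper non-empty suffixes $\s{s}$. Such suffixes fall into two families: the suffixes of $\s{v}$, and the strings $\s{t}\s{v}$ where $\s{t}$ is a proper non-empty suffix of $\s{u}$. In the first family I would use that $\s{v}$ is Lyndon (so $\s{v} < \s{s}$) together with $\s{u} < \s{v}$; in the second I would use that $\s{u}$ is Lyndon (so $\s{u} < \s{t}$) and compare $\s{u}\s{v}$ against $\s{t}\s{v}$ character by character. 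Primitivity of $\s{u}\s{v}$ follows from its being strictly minimal in its conjugacy class, so it is indeed a Lyndon word.

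For existence I would begin from the trivial factorization $\s{w} = \s{w}[1]\s{w}[2]\cdots\s{w}[n]$ into single letters, each of which is vacuously Lyndon. While some adjacent pair of factors satisfies $\s{u}_i < \s{u}_{i+1}$, I replace that pair by the single Lyndon word $\s{u}_i\s{u}_{i+1}$, which is legitimate by the concatenation lemma. Each such replacement strictly decreases the number of factors, so the procedure terminates; at termination every adjacent pair satisfies $\s{u}_i \ge \s{u}_{i+1}$, giving a non-increasing product of Lyndon words whose concatenation is $\s{w}$. For uniqueness I would prove a characterization of the leading factor: in any non-increasing Lyndon factorization $\s{w} = \s{u}_1 \s{u}_2 \cdots \s{u}_k$, the factor $\s{u}_1$ is precisely the longest prefix of $\s{w}$ that is itself a Lyndon word. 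Granting this, uniqueness follows by induction on $|\s{w}|$: any two non-increasing factorizations must share the same first factor (each equals the longest Lyndon prefix), and stripping it leaves non-increasing Lyndon factorizations of the shorter word $\s{u}_2\cdots\s{u}_k$, to which the inductive hypothesis applies.

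To prove the leading-factor characterization I would argue that no prefix $\s{p}$ of $\s{w}$ strictly longer than $\s{u}_1$ can be Lyndon: such a $\s{p}$ would extend from $\s{u}_1$ into $\s{u}_2$, and exhibiting the proper suffix of $\s{p}$ that begins at the start of $\s{u}_2$ yields, via $\s{u}_2 \le \s{u}_1$ and the Lyndon minimality of $\s{p}$, a lexicographic contradiction. I expect the main obstacle throughout to be the lexicographic case analysis in the concatenation lemma and in this characterization, in particular the borderline situations where one factor is a prefix of another or where adjacent factors are equal; these require invoking that Lyndon words are primitive and strictly less than each of their proper suffixes, rather than mere first-character comparisons.
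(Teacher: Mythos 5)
The paper does not prove this theorem: it is stated with a citation to \cite{CFL58} and used only as background, so there is no in-paper argument to compare against. Your proposal is the classical textbook proof (Duval/Lothaire style): the concatenation lemma ($\s{u},\s{v}$ Lyndon with $\s{u}<\s{v}$ implies $\s{u}\s{v}$ Lyndon), greedy merging of adjacent increasing pairs starting from the letter-by-letter factorization for existence, and the characterization of $\s{u}_1$ as the longest Lyndon prefix of $\s{w}$ for uniqueness, followed by induction on length. All three ingredients are correct and fit together as you describe, and you rightly anticipate that the prefix-of-one-another borderline cases in the concatenation lemma need the full ``Lyndon words are strictly smaller than their proper suffixes'' property rather than first-character comparisons.

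One step of your sketch is stated too narrowly and would fail as written: in the leading-factor characterization you take ``the proper suffix of $\s{p}$ that begins at the start of $\s{u}_2$.'' A Lyndon prefix $\s{p}$ longer than $\s{u}_1$ need not end inside $\s{u}_2$; in general $\s{p}=\s{u}_1\cdots\s{u}_{j-1}\s{s}$ with $\s{s}$ a non-empty prefix of $\s{u}_j$ and $j\ge 2$. The suffix beginning at $\s{u}_2$ is $\s{u}_2\cdots\s{u}_{j-1}\s{s}$, and the intended contradiction via $\s{u}_2\le\s{u}_1$ does not follow from a direct comparison when $\s{u}_2$ equals, or is a proper prefix of, $\s{u}_1$. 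The standard repair is to use instead the suffix $\s{s}$ itself, i.e., the suffix beginning at the start of the \emph{last} factor that $\s{p}$ meets: since $\s{p}$ is Lyndon, $\s{p}<\s{s}$; since $\s{s}$ is a non-empty prefix of $\s{u}_j$, $\s{s}\le\s{u}_j$; by non-increasingness, $\s{u}_j\le\s{u}_1$; and since $\s{u}_1$ is a proper prefix of $\s{p}$, $\s{u}_1<\s{p}$. Chaining these yields $\s{p}<\s{p}$, a contradiction. With this substitution your plan goes through intact.
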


Theorem \ref{lLyndon-thrm} shows that there is a unique decomposition of
any word into non-increasing Lyndon words $(\s{u}_1 \ge \s{u}_2 \ge \cdots
\ge \s{u}_k)$. We proceed to define a non-lexicographic order, $V$-order,
and then establish useful new lexicographic characteristics for $V$-order. 

Let $\s{x}=x_1x_2\cdots x_n$ be a
string over $\Sigma$. Define $h \in \{1,\ldots,n\}$ by $h = 1$ if $x_1 \le x_2 \le \cdots
 \le x_n$; otherwise, by the unique value such that $x_{h-1}>x_h \le x_{h+1} \le x_{h+2}
\le \cdots \le x_n$. Let $\s{x}^*=x_1x_2\cdots x_{h-1}x_{h+1}\cdots x_n$, where
the star * indicates deletion of the letter $x_h$. Write $\s{x}^{s*}$ for
$(...(\s{x}^*)^{*}...)^{*}$ with $s \ge 0$ stars. Let $g =
\max\{x_1,x_2,...,x_n\}$, and let $k$ be the number of occurrences of
$g$ in $\s{x}$. Then the sequence $\s{x},\s{x}^*,\s{x}^{2*},...$
ends with $g^{k},...,g^2,g^1,g^0=\s{\varepsilon}$. In the {\it star tree}
each string $\s{x}$ over $\Sigma$ labels a vertex, and there is a directed edge from $\s{x}$
to $\s{x}^*$, with the empty string $\s{\varepsilon}$ as the root.

\begin{definition}
\label{def-V_order}
We define {\rm $V$-order} $\prec$ between distinct strings \s{x},\s{y} with $\s{x} \prec \s{y}$.
First $\s{x} \prec \s{y}$ if \s{x} is in the path
$\s{y},\s{y}^*,\s{y}^{2*},...,\s{\varepsilon}$. If $\s{x},\s{y}$ are not in a path, there exist
smallest $s,t$ such that $\s{x}^{(s+1)*}=\s{y}^{(t+1)*}$. Put $\s{s}=\s{x}^{s*}$ and
$\s{t}=\s{y}^{t*}$; then $\s{s} \neq \s{t}$ but $|\s{s}| = |\s{t}| = m$ say.
Let $j \in 1..m$ be the greatest integer such that $\s{s}[j] \ne \s{t}[j]$.
If $\s{s}[j]<\s{t}[j]$ in $\Sigma$ then $\s{x} \prec \s{y}$. Clearly $\prec$ is a total
order.
\end{definition}

For instance, using the natural ordering of integers,  if $\s{x} = 32415$, then
$\s{x}^{*} = 3245$, $\s{x}^{2*} = 345$, $\s{x}^{3*} = 45$ and so $45 \prec 32415$.

\begin{definition}\label{Vform}{\cite{DaD96, DD03, DDS11, DDS13}}
The \itbf{$V$-form} of a string \s{x} is defined as
$$V_k(\s{x}) = \s{x} = \s{x_0}g\s{x_1}g\cdots\s{x_{k-1}}g\s{x_k}$$

\noindent for strings $\s{x_i},\ i = 0,1,\ldots,k$, where $g$ is the largest
letter in \s{x} --- thus we suppose that $g$ occurs exactly $k$ times.
For clarity, when more than one string is involved, we use the notation $g =
\mathcal{L}_{\s{x}}$, $k = \mathcal{C}_{\s{x}}$.
\end{definition}

\begin{lemma}\label{lem-a}
{\cite{DaD96, DD03, DDS11,DDS13}} Suppose we are given distinct strings
$\s{x}$ and $\s{y}$ with corresponding $V$-forms as follows:
\begin{eqnarray}
\s{x} & = & \s{x}_0 \mathcal \EL_{\s{x}} \s{x}_1 \mathcal \EL_{\s{x}} \s{x}_2 \cdots
\s{x}_{j-1} \mathcal \EL_{\s{x}} \s{x}_{j}, \nonumber \\
\s{y} & = & \s{y}_0 \mathcal \EL_{\s{y}} \s{y}_1 \mathcal \EL_{\s{y}} \s{y}_2 \cdots
\s{y}_{k-1} \mathcal \EL_{\s{y}} \s{y}_{k}, \nonumber
\end{eqnarray}
where $j = \mathcal{C}_{\s{x}},\ k = \mathcal{C}_{\s{y}}$.

Let $h \in \{0 \ldots \max(j,k)\}$ be the least integer such that $\s{x}_h
\neq \s{y}_h$. Then  $\s{x} \prec \s{y}$ if, and only if, one of the
following conditions holds:
\begin{description}
\item [(C1)] $\mathcal \EL_{\s{x}} < \mathcal \EL_{\s{y}}$\label{lem-a-c1}
\item [(C2)] $\mathcal \EL_{\s{x}} = \mathcal \EL_{\s{y}}$ and
$\mathcal{C}_{\s{x}} < \mathcal{C}_{\s{y}}$ \label{lem-a-c2}
\item [(C3)] $\mathcal \EL_{\s{x}} = \mathcal \EL_{\s{y}}$,
$\mathcal{C}_{\s{x}} = \mathcal{C}_{\s{y}}$
and $\s{x}_h \prec \s{y}_h$.\label{lem-a-c3}
\end{description}
\end{lemma}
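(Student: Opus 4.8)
The plan is to reduce everything to a close reading of how the star operation $*$ acts on a string written in $V$-form. First I would prove a structural lemma: if $\s{w}=\s{w}_0 g \s{w}_1 g\cdots g \s{w}_k$ with $g=\EL_{\s{w}}$, then one application of $*$ deletes exactly the letter following the rightmost descent (or the first letter, when $\s{w}$ is non-decreasing), this letter is strictly smaller than $g$ until $\s{w}$ has collapsed to a power of $g$, and hence all $k$ copies of $g$ persist until the string is $g^k$. More sharply, the reduction empties the factors from right to left, $\s{w}_k$ first and $\s{w}_0$ last, and while a factor $\s{w}_i$ is being consumed the deletions inside it coincide with the independent reduction $\s{w}_i,\s{w}_i^{*},\s{w}_i^{2*},\dots$; I will call this the mirroring property. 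It follows because the $g$ immediately preceding $\s{w}_i$ forces the rightmost descent of the whole string to lie at or to the right of that $g$, so the site chosen inside $\s{w}_i$ is exactly the one $*$ would pick on $\s{w}_i$ in isolation, whether or not $\s{w}_i$ is itself non-decreasing.

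Next I would locate the first string common to the reduction paths $\s{x},\s{x}^{*},\dots$ and $\s{y},\s{y}^{*},\dots$. Since $*$ is deterministic, agreement at one length forces agreement at all smaller lengths, so the set of agreeing lengths is an initial segment and the ``merge point'' is well defined; it is precisely what fixes the strings $\s{s},\s{t}$ of Definition~\ref{def-V_order}. For \textbf{(C1)} every nonempty string on $\s{x}$'s path has largest letter $\EL_{\s{x}}$ and every one on $\s{y}$'s path has largest letter $\EL_{\s{y}}\neq\EL_{\s{x}}$, so the paths meet only at $\varepsilon$; thus $\s{s}=\EL_{\s{x}}$ and $\s{t}=\EL_{\s{y}}$ are single letters with $\s{s}<\s{t}$, giving $\s{x}\prec\s{y}$. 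For \textbf{(C2)}, where the largest letter is a common $g$ but $\mathcal{C}_{\s{x}}<\mathcal{C}_{\s{y}}$, the $\s{x}$-path keeps exactly $\mathcal{C}_{\s{x}}$ copies of $g$ until it collapses to $g^{\mathcal{C}_{\s{x}}}$, whereas every string of the $\s{y}$-path either keeps $\mathcal{C}_{\s{y}}$ copies or is a pure power of $g$; hence the paths first meet at $g^{\mathcal{C}_{\s{x}}}$, with $\s{y}$ arriving from $g^{\mathcal{C}_{\s{x}}+1}$ and $\s{x}$ from a string bearing a single letter $a<g$, so comparing at the position of $a$ again yields $\s{x}\prec\s{y}$.

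Case \textbf{(C3)} is the crux and the step I expect to fight with. Here $\s{x},\s{y}$ share $g$ and $k$, agree on $\s{x}_i=\s{y}_i$ for $i<h$, and differ at $\s{x}_h\neq\s{y}_h$. The pivotal claim is that no string on $\s{x}$'s path equals one on $\s{y}$'s path until both have emptied every factor of index $>h$: while either still retains such a factor, a comparison of $V$-forms shows the two reduced strings differ, because either both still display the intact unequal factors $\s{x}_h$ and $\s{y}_h$, or one displays a $g$ where the other displays a genuine factor letter. Once both have the common shape $\s{x}_0 g\cdots g\,\s{x}_{h-1}g$ followed by a single reducing window and a block $g^{k-h}$, the mirroring property identifies that window with the independent reductions of $\s{x}_h$ and $\s{y}_h$, a length count shows the two windows are aligned, and therefore the merge point of $\s{x},\s{y}$ sits exactly over the merge point of $\s{x}_h,\s{y}_h$. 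In the generic situation the predecessors $\s{s},\s{t}$ then have the clean form $\s{x}_0 g\cdots g\,\s{x}_h^{a*}\,g^{k-h}$ and $\s{x}_0 g\cdots g\,\s{y}_h^{b*}\,g^{k-h}$, their greatest differing coordinate lies in the window, and it reproduces exactly the comparison of $\s{x}_h$ with $\s{y}_h$.

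The one genuinely delicate sub-case, which I would treat explicitly, is that in which $\s{x}_h$ lies on the reduction path of $\s{y}_h$ (so the factorwise merge point is $\s{x}_h$ itself) while $\s{x}$ still owns a nonempty factor of index $>h$: the two paths then reach the merge point out of step, and $\s{s}$ carries a stray leftover letter $a<g$ inherited from the last surviving high-index factor rather than having the clean form above. The thing to check --- and the reason the degenerate alignment never arises --- is that this leftover always sits at least one position of $g$ to the right of factor $h$, hence at a coordinate where $\s{t}$ shows $g$; as $a<g$ and all higher coordinates of $\s{s},\s{t}$ are matching copies of $g$, the greatest differing coordinate once more gives $\s{s}<\s{t}$, i.e.\ $\s{x}\prec\s{y}$, in agreement with $\s{x}_h\prec\s{y}_h$. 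Finally the biconditional follows from these three forward implications together with the totality of $\prec$: if none of \textbf{(C1)}--\textbf{(C3)} holds then $\s{x},\s{y}$ satisfy a role-swapped version of one of them, forcing $\s{y}\prec\s{x}$ and hence $\s{x}\not\prec\s{y}$.
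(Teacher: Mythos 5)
The paper offers no proof of Lemma~\ref{lem-a} to compare against: it is imported verbatim, with citations, from earlier work (\cite{DaD96,DD03,DDS11,DDS13}), so any proof you give is necessarily a different route --- in this case a self-contained derivation directly from the star-tree mechanics of Definition~\ref{def-V_order}, which is valuable precisely because the present paper treats the lemma as a black box. Your argument is essentially sound, and its three pillars are the right ones: (i) the deleted letter is always strictly smaller than the maximal letter $g$ until the string collapses to a power of $g$, so the count of $g$'s is invariant along the path; (ii) the reduction consumes the $V$-form factors strictly from right to left; and (iii) your ``mirroring property'' --- that while factor $\s{w}_i$ is active, the deletion site coincides with the site the star operator would choose on $\s{w}_i$ in isolation, because the $g$ preceding $\s{w}_i$ supplies a descent even when $\s{w}_i$ is non-decreasing --- which is what makes the (C3) window comparison reduce exactly to the comparison of $\s{x}_h$ with $\s{y}_h$. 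You also correctly isolate and resolve the genuinely delicate sub-case where $\s{x}_h$ lies on the reduction path of $\s{y}_h$ while $\s{x}$ still owns a nonempty factor of index greater than $h$: the stray letter $a<g$ indeed lands at least one $g$-position to the right of the window, where the other predecessor shows $g$, so the greatest differing coordinate still decides in favour of $\s{x}$. Two small expository debts remain, both settled by the first clause of Definition~\ref{def-V_order} rather than by predecessor comparison: in (C2), if $\s{x}=g^{\EC_{\s{x}}}$ (all factors empty) there is no predecessor ``bearing a single letter $a<g$'' --- instead $\s{x}$ lies on $\s{y}$'s path and $\s{x}\prec\s{y}$ directly; and in your (C3) sub-case, if every $\s{x}_i$ with $i>h$ is empty then $\s{x}$ itself equals the merge point and again the path clause applies. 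You invoke this clause only implicitly; a careful write-up should say so. Your closing trichotomy argument for the ``only if'' direction is fine, given that the paper's definition asserts $\prec$ is a total order and that equal maxima, equal counts and equal factors force $\s{x}=\s{y}$.
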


\begin{lemma}\label{lem-subsequce}
{\cite{DDS11,DDS13}} For given strings $\s{v}$ and \s{x},
if \s{v} is a proper subsequence of \s{x},
then $\s{v}\prec \s{x}$. 
\end{lemma}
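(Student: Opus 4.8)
The plan is to prove the statement by strong induction on $|\s{x}|$, using the $V$-form comparison criterion of Lemma~\ref{lem-a} at each step to reduce to a strictly shorter instance. As a base (and degenerate) case, suppose $\s{v} = \epsilon$. Since $\s{v}$ is a \emph{proper} subsequence, $\s{x}$ is nonempty, and $\epsilon$ lies on the star-path $\s{x}, \s{x}^*, \s{x}^{2*}, \ldots, \epsilon$; hence $\epsilon \prec \s{x}$ directly from the first clause of Definition~\ref{def-V_order}. So I may assume $\s{v} \neq \epsilon$, and note that $\s{v} \neq \s{x}$ since the subsequence is proper, so Lemma~\ref{lem-a} applies to the distinct pair $\s{v}, \s{x}$.

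Next I would set $g = \EL_{\s{x}}$, $g' = \EL_{\s{v}}$ and $k = \EC_{\s{x}}$, $k' = \EC_{\s{v}}$, and split according to the three conditions of Lemma~\ref{lem-a}. Because every letter occurring in $\s{v}$ also occurs in $\s{x}$, we have $g' \le g$; and because any embedding of $\s{v}$ into $\s{x}$ matches the copies of $g$ in $\s{v}$ to distinct $g$-positions of $\s{x}$, when $g' = g$ we also obtain $k' \le k$. Thus if $g' < g$ condition (C1) gives $\s{v} \prec \s{x}$, and if $g' = g$ but $k' < k$ condition (C2) gives $\s{v} \prec \s{x}$. The only substantive case is $g' = g$ together with $k' = k$, which I treat via condition (C3).

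In that remaining case I appeal to the $V$-form block decompositions $\s{x} = \s{x}_0 g \s{x}_1 g \cdots g \s{x}_k$ and $\s{v} = \s{v}_0 g \s{v}_1 g \cdots g \s{v}_k$, where none of the blocks contains $g$. The crux is to show the embedding aligns the blocks: since $\s{x}$ and $\s{v}$ contain exactly the same number $k$ of copies of the maximal letter $g$, the $k$ occurrences of $g$ in $\s{v}$ must be matched, in order, to the $k$ occurrences of $g$ in $\s{x}$, so the $i$-th is paired with the $i$-th. Consequently each $\s{v}_i$ is a subsequence of the corresponding $\s{x}_i$. Because $\s{v}$ is a proper subsequence while all $k$ copies of $g$ are retained, at least one block strictly shrinks; taking the least index $h$ with $\s{v}_h \neq \s{x}_h$ makes $\s{v}_h$ a proper subsequence of $\s{x}_h$. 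Since $g \notin \s{x}_h$ we have $|\s{x}_h| < |\s{x}|$, so the induction hypothesis yields $\s{v}_h \prec \s{x}_h$, and condition (C3) then delivers $\s{v} \prec \s{x}$, closing the induction.

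I expect the block-alignment step to be the main obstacle: one must argue carefully that the matched occurrences of $g$ pair up in increasing order of index, rather than merely injectively, so that each $\s{v}_i$ lands inside $\s{x}_i$ and the induction can be applied to genuinely shorter strings $\s{x}_h$. By contrast, the surrounding arithmetic ($g' \le g$, $k' \le k$, and $|\s{x}_h| < |\s{x}|$) and the case split against (C1)--(C3) are routine.
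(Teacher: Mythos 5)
The paper offers no proof of this lemma to compare against: it is imported verbatim from \cite{DDS11,DDS13} as part of the background, so your attempt must be judged on its own merits --- and on those merits it is correct. The base case $\s{v}=\epsilon$ is rightly dispatched by the star path of Definition~\ref{def-V_order}; the observations $\EL_{\s{v}}\le\EL_{\s{x}}$ and, when the maximal letters agree, $\EC_{\s{v}}\le\EC_{\s{x}}$ correctly settle the (C1) and (C2) cases; and the step you flagged as the crux is sound as you argue it: an order-preserving injection from the $k$ occurrences of $g$ in $\s{v}$ into the $k$ occurrences of $g$ in $\s{x}$ between two sets of equal size is forced to pair the $i$-th with the $i$-th, so each $\s{v}_i$ embeds in $\s{x}_i$, the least index $h$ with $\s{v}_h\ne\s{x}_h$ (which is exactly the $h$ of Lemma~\ref{lem-a}) gives a \emph{proper} subsequence $\s{v}_h$ of $\s{x}_h$, and since $k\ge 1$ for nonempty $\s{x}$ the block $\s{x}_h$ is strictly shorter than $\s{x}$, so strong induction plus (C3) closes the loop. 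Methodologically, your route takes Lemma~\ref{lem-a} as the axiom and derives the subsequence property from it by induction on $|\s{x}|$, which is tidy and self-contained within this paper's framework, where Lemma~\ref{lem-a} is also quoted as given; in the source literature the lemma is instead established as part of the basic theory of the star tree and the $\prec$ relation itself. This means your argument is only as independent as the provenance of Lemma~\ref{lem-a}, but no circularity arises inside this paper, since Lemma~\ref{lem-a} is a structural characterization of $\prec$ proved from Definition~\ref{def-V_order}, not from the subsequence property. One cosmetic caveat: Lemma~\ref{lem-a} is stated for \emph{distinct} strings, which properness guarantees --- you note this, and it is worth keeping explicit in a final write-up.
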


\begin{example}\label{ex-lex+V_order}

We compare two dictionaries for a set of English words over the ordered Roman alphabet.

\noindent Lexorder($<$) dictionary: 
$catastrophe ~<~ sop ~<~ strop ~<~ strophe ~<~ top.$

\noindent The well-known lexorder positional technique seeks the first
difference from the left and then applies the ordering of the alphabet.

\noindent $V$-order ($\prec$) dictionary: 
$sop ~\prec~ top ~\prec~ strop ~\prec~ strophe ~\prec~ catastrophe.$

\noindent The first $V$-order comparison is determined by Lemma \ref{lem-a}(C1)
and the following three by the useful Lemma \ref{lem-subsequce}.
\end{example}

\section{New Results on $V$-Order}\label{sect-results}
A main interest of this paper is to consider positional lexorder-type ordering
techniques for $V$-order, for which we first establish some basics.
Given an ordered alphabet $\Sigma=\{1<2< \cdots \}$ and a string $\s{x} \in
\Sigma^+$ with $|\s{x}| > 1$, then from conditions (C1, C2) we have, as for lexorder,
$1 \prec \s{x} \prec \s{x^i}$ for all $i > 1$. For strings $\s{u}, \s{v}, \s{w}
\in \Sigma^+$ with $\s{u} \prec \s{v} \prec \s{w}$, we find by Lemma
\ref{lem-subsequce} that, again as for lexorder, both $\s{u} \prec \s{u}\s{v}$
and $\s{vw} \nprec \s{w}$ (in contrast to Lyndon words).
In general, for $i,j >1$, we can say that
$$ 1 \prec \s{u} \prec \s{u^2} \prec \cdots \prec \s{u^i} \prec \s{u^i v} \prec
\cdots \prec \s{u^i v^j} \prec \cdots \prec \s{u^i v^jw} \prec \cdots $$
We begin by generalizing Lemma 2.5 in \cite{DS14}:

\begin{lemma}\label{lem-unexpected1}
For any two strings \s{x}, \s{y} and $\lambda \in \Sigma$,
$\s{x} \prec \s{y}  \Leftrightarrow  \s{x} \lambda \prec \s{y}\lambda.$
\end{lemma}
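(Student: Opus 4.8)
The plan is to reduce the biconditional to a single implication and then establish that implication by induction using the $V$-form characterisation of Lemma~\ref{lem-a}. Since $\prec$ is a strict total order and $\s{x}\lambda=\s{y}\lambda$ holds exactly when $\s{x}=\s{y}$, it suffices to prove the forward direction $\s{x}\prec\s{y}\Rightarrow\s{x}\lambda\prec\s{y}\lambda$ for all distinct $\s{x},\s{y}$: applying this to the pair $(\s{y},\s{x})$ and taking the contrapositive, together with totality, yields the converse. I would run the argument by induction on $|\s{x}|+|\s{y}|$.

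The key preliminary is to record how appending $\lambda$ rewrites the $V$-form of a string $\s{z}$ with largest letter $\EL_{\s{z}}$, in three regimes. If $\lambda>\EL_{\s{z}}$ then $\lambda$ becomes the unique largest letter, so $\EL_{\s{z}\lambda}=\lambda$, $\EC_{\s{z}\lambda}=1$, and the entire string $\s{z}$ becomes block $0$ with an empty final block. If $\lambda=\EL_{\s{z}}$ then $\EL$ is unchanged, $\EC$ increases by one, and the $V$-form acquires one further copy of the largest letter followed by an empty final block, all earlier blocks being untouched. If $\lambda<\EL_{\s{z}}$ then $\EL$ and $\EC$ are unchanged and only the final block grows, from $\s{z}_{\EC_{\s{z}}}$ to $\s{z}_{\EC_{\s{z}}}\lambda$. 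These three rewritings drive every case below.

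With these in hand I would split according to which of (C1), (C2), (C3) of Lemma~\ref{lem-a} certifies $\s{x}\prec\s{y}$, and within each according to how $\lambda$ compares with the relevant largest letters. Most branches are immediate: whenever the decision rests on $\EL$ (C1) or on $\EC$ (C2), appending $\lambda$ shifts both sides in lock-step and leaves the strict inequality intact; the only boundary needing attention is the (C1) case $\lambda=\EL_{\s{y}}$, where the two maxima become equal and the decision passes to (C2), there being a single occurrence of $\lambda$ on the left against at least two on the right. When $\lambda$ strictly exceeds both $\EL_{\s{x}}$ and $\EL_{\s{y}}$, both appended strings acquire $\lambda$ as unique maximum with a single occurrence, so (C3) applies with block $0$ equal to $\s{x}$ and $\s{y}$ respectively, and the comparison collapses back to the hypothesis $\s{x}\prec\s{y}$. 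For (C3) itself with $\lambda\le\EL_{\s{x}}=\EL_{\s{y}}=g$, the earlier blocks are preserved, so if the first differing block $\s{x}_h\neq\s{y}_h$ occurs strictly before the final one the decision is unchanged, and the $\lambda=g$ sub-case merely appends equal empty final blocks and is likewise unaffected.

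The one genuinely recursive case --- and the main obstacle --- is (C3) with $\lambda<g$ when the first differing block is the \emph{final} one, say index $c=\EC_{\s{x}}=\EC_{\s{y}}$. There the last blocks become $\s{x}_c\lambda$ and $\s{y}_c\lambda$, so I must deduce $\s{x}_c\lambda\prec\s{y}_c\lambda$ from the hypothesis $\s{x}_c\prec\s{y}_c$ supplied by (C3). This is exactly the statement of the lemma applied to $\s{x}_c,\s{y}_c$, which are strictly shorter than $\s{x},\s{y}$ (each omits at least the $c\ge 1$ copies of $g$), so the induction hypothesis applies and the recursion is well founded. The empty-block situations are consistent with Lemma~\ref{lem-subsequce}, which guarantees $\epsilon\prec\s{w}$ for nonempty $\s{w}$, so no separate base-case difficulty arises. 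The care needed is entirely in verifying that the three $V$-form rewritings interact correctly with the three clauses of Lemma~\ref{lem-a}, and in isolating this single branch as the locus of the induction.
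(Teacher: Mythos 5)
Your proof is correct and follows essentially the same route as the paper's: a case analysis on which of (C1)--(C3) of Lemma~\ref{lem-a} certifies $\s{x}\prec\s{y}$, combined with the three regimes of $\lambda$ against the maximal letters, with the sole nontrivial branch being (C3) when the first differing block is the final one, resolved recursively on $\s{x}_c\lambda,\s{y}_c\lambda$. The only (harmless) differences are organizational: you ground the recursion by induction on $|\s{x}|+|\s{y}|$ where the paper appeals to the strict decrease of the maximal letter in successive reductions, and you derive the converse implication once from totality of $\prec$ where the paper verifies both directions case by case.
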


\begin{proof}
Let $\s{x'} = \s{x}\lambda$, $\s{y'} = \s{y}\lambda$.
First observe that if $\mathcal{L}_{\s{x}} < \mathcal{L}_{\s{y}}$,
then by (C1), $\s{x} \prec \s{y}$.
Furthermore:
\begin{itemize}
\item[$\bullet$]
if $\lambda < \mathcal{L}_{\s{y}}$, then $\s{x'} \prec \s{y'}$ by (C1), because
$ \mathcal{L}_{\s{x}} \le \mathcal{L}_{\s{x'}} < \mathcal{L}_{\s{y}} = \mathcal{L}_{\s{y'}}$;
\item[$\bullet$]
if $\lambda = \mathcal{L}_{\s{y}}$, then $\s{x'} \prec \s{y'}$ by (C2),
because $\mathcal{L}_{\s{x'}} = \mathcal{L}_{\s{y'}} = \lambda$ and
$\mathcal{C}_{\s{x'}} = 1 < \mathcal{C}_{\s{y'}}$;
\item[$\bullet$]
if $\lambda > \mathcal{L}_{\s{y}}$, then $\s{x'} \prec \s{y'}$ by (C3),
because $\mathcal{L}_{\s{x'}} = \mathcal{L}_{\s{y'}} = \lambda$,
$\mathcal{C}_{\s{x'}} = \mathcal{C}_{\s{y'}} = 1$, and $\s{x} \prec \s{y}$.
\end{itemize}
Thus the lemma holds for $\mathcal{L}_{\s{x}} < \mathcal{L}_{\s{y}}$
and, by the complementary argument, it holds also for
$\mathcal{L}_{\s{y}} < \mathcal{L}_{\s{x}}$.
We may assume therefore that $\mathcal{L}_{\s{x}} = \mathcal{L}_{\s{y}}$.

Suppose then that $\mathcal{C}_{\s{x}} < \mathcal{C}_{\s{y}}$,
so that by (C2), $\s{x} \prec \s{y}$.
Furthermore:
\begin{itemize}
\item[$\bullet$]
if $\lambda \le \mathcal{L}_{\s{x}} = \mathcal{L}_{\s{y}}$,
then $\s{x'} \prec \s{y'}$ by (C2),
because $\mathcal{C}_{\s{x'}} = \mathcal{C}_{\s{x}}\+ \delta <
\mathcal{C}_{\s{y}}\+ \delta = \mathcal{C}_{\s{y'}}$, where $\delta = 0$
($\lambda < \mathcal{L}_{\s{x}}$) or 1 ($\lambda = \mathcal{L}_{\s{x}}$);
\item[$\bullet$]
if $\lambda > \mathcal{L}_{\s{x}}$,
then $\s{x'} \prec \s{y'}$ by (C3),
because $\mathcal{L}_{\s{x'}} = \mathcal{L}_{\s{y'}} = \lambda$,
$\mathcal{C}_{\s{x'}} = \mathcal{C}_{\s{y'}} = 1$, and $\s{x} \prec \s{y}$.
\end{itemize}
Thus the lemma holds for $\mathcal{C}_{\s{x}} < \mathcal{C}_{\s{y}}$,
and as above also for $\mathcal{C}_{\s{y}} < \mathcal{C}_{\s{x}}$.

Suppose therefore that $\mathcal{L}_{\s{x}} = \mathcal{L}_{\s{y}}$,
$\mathcal{C}_{\s{x}} = \mathcal{C}_{\s{y}}$.
Then whether or not
$\s{x} \prec \s{y}$ depends on the least value $h$ of Lemma~\ref{lem-a}
such that $\s{x}_h \prec \s{y}_h$ or $\s{y}_h \prec \s{x}_h$:
\begin{itemize}
\item[$\bullet$]
If $\lambda = \mathcal{L}_{\s{x}} = \mathcal{L}_{\s{y}}$,
then $h$ is unchanged by appending $\lambda$
to \s{x} and to \s{y}, so that, in this case,
$\s{x} \prec \s{y} \Leftrightarrow \s{x'} \prec \s{y'}$, as required.
\item[$\bullet$]
For $\lambda > \mathcal{L}_{\s{x}}$,
we find as above that $\mathcal{L}_{\s{x'}} = \mathcal{L}_{\s{y'}} = \lambda$,
$\mathcal{C}_{\s{x'}} = \mathcal{C}_{\s{y'}} = 1$, the ordering of \s{x'} and
\s{y'} is equivalent to the ordering of \s{x} and \s{y}.
\item[$\bullet$]
Finally, suppose that $\lambda < \mathcal{L}_{\s{x}} = \mathcal{L}_{\s{y}}$.
If $h < \mathcal{C}_{\s{x}}$, then as above the ordering of \s{x'},\s{y'}
corresponds to the ordering of \s{x},\s{y},
unaffected by appending $\lambda$.
If on the other hand $h = \mathcal{C}_{\s{x}}$,
then the problem reduces recursively to ordering $\s{x_h}\lambda,\s{y_h}\lambda$
based on the ordering of \s{x_h},\s{y_h},
where $\mathcal{L}_{\s{x_h}} < \mathcal{L}_{\s{x}}$ 
and $\mathcal{L}_{\s{y_h}} < \mathcal{L}_{\s{y}}$.
Thus, after a finite number of such reductions,
one of the above cases must hold.
\end{itemize}

This completes the proof.  \qed
\end{proof}

\begin{lemma}\label{lem-unexpected2}
For any two strings \s{x}, \s{y} and $\lambda \in \Sigma$,
$\s{x} \prec \s{y} \Leftrightarrow \lambda\s{x} \prec \lambda\s{y}.$
\end{lemma}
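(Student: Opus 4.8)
The plan is to mirror the case analysis of Lemma~\ref{lem-unexpected1}, replacing the append by a prepend and tracking how prepending reshapes the $V$-form. Writing $\s{x'} = \lambda\s{x}$ and $\s{y'} = \lambda\s{y}$, I would first record the three possible effects of prepending $\lambda$ to a string $\s{z}$ whose $V$-form is governed by $\mathcal{L}_{\s{z}}$ and $\mathcal{C}_{\s{z}}$: (i) if $\lambda > \mathcal{L}_{\s{z}}$, then $\lambda$ becomes the new unique largest letter, $\mathcal{C}_{\s{z'}} = 1$, block~$0$ is empty and $\s{z'}_1 = \s{z}$; (ii) if $\lambda = \mathcal{L}_{\s{z}}$, then $\mathcal{L}$ is unchanged, $\mathcal{C}_{\s{z'}} = \mathcal{C}_{\s{z}} + 1$, and the blocks merely shift, $\s{z'}_0 = \epsilon$ with $\s{z'}_i = \s{z}_{i-1}$; (iii) if $\lambda < \mathcal{L}_{\s{z}}$, then $\mathcal{L}$ and $\mathcal{C}$ are unchanged and only the initial block grows, $\s{z'}_0 = \lambda\s{z}_0$ with $\s{z'}_i = \s{z}_i$ for $i \ge 1$.

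Armed with these, I would run the outer case split of Lemma~\ref{lem-unexpected1}. When the largest letters differ, or they agree but the counts differ, the relation $\s{x} \prec \s{y}$ is fixed by (C1) or (C2), and I would verify through Lemma~\ref{lem-a} that prepending $\lambda$ cannot reverse it (and, symmetrically, cannot reverse the reverse inequality). The only mildly delicate point is when $\lambda$ is promoted to a new largest letter common to both strings: then effect~(i) empties block~$0$ in each, so $\s{x'}_1 = \s{x}$ and $\s{y'}_1 = \s{y}$, and (C3) at $h = 1$ returns the comparison to $\s{x}$ versus $\s{y}$. All of these are direct appeals to Lemma~\ref{lem-a}, exactly as in the append proof.

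The substantive case is $\mathcal{L}_{\s{x}} = \mathcal{L}_{\s{y}} = g$ with $\mathcal{C}_{\s{x}} = \mathcal{C}_{\s{y}}$, where the order hinges on the least block index $h$ with $\s{x}_h \ne \s{y}_h$. By effects (i)--(ii) a $\lambda \ge g$ either sends this index to $h+1$ (when $\lambda = g$) or collapses the comparison to $\s{x'}_1 = \s{x}$, $\s{y'}_1 = \s{y}$ (when $\lambda > g$), and in both instances (C3) preserves the order. For $\lambda < g$, effect~(iii) alters only block~$0$: if $h \ge 1$ then $\s{x}_0 = \s{y}_0$, hence $\s{x'}_0 = \s{y'}_0$, the least differing index is still $h$, and (C3) gives the equivalence at once. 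The one recursive situation is $\lambda < g$ together with $h = 0$, where (C3) reduces $\s{x'} \prec \s{y'}$ to $\lambda\s{x}_0 \prec \lambda\s{y}_0$ and reduces $\s{x} \prec \s{y}$ to $\s{x}_0 \prec \s{y}_0$; the claim for $\s{x}, \s{y}$ thus becomes exactly the claim for the shorter strings $\s{x}_0, \s{y}_0$.

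I therefore expect the main work to lie not in any single clause of Lemma~\ref{lem-a} but in packaging this last reduction as a well-founded induction. Since $g$ occurs at least once in each of $\s{x}$ and $\s{y}$, the block $\s{x}_0$ (respectively $\s{y}_0$) is strictly shorter, so the reduction strictly decreases $|\s{x}| + |\s{y}|$; I would accordingly frame the whole argument as induction on $|\s{x}| + |\s{y}|$. The base case, in which some $\s{x}_0$ or $\s{y}_0$ is empty, is immediate from Lemma~\ref{lem-subsequce}, since a prepended common letter keeps the empty string a proper subsequence of the other. After finitely many reductions one of the non-recursive subcases applies, and the equivalence follows in both directions.
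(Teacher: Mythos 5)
Your proposal is correct and takes essentially the same approach as the paper, which proves Lemma~\ref{lem-unexpected2} by rerunning the case analysis of Lemma~\ref{lem-unexpected1} for prepending, with the recursion landing on $\lambda\s{x}_0,\lambda\s{y}_0$ exactly as in your $\lambda < \mathcal{L}_{\s{x}}$, $h=0$ subcase. The only (immaterial) difference is the termination measure: the paper observes $\mathcal{L}_{\s{x}_0} < \mathcal{L}_{\s{x}}$ and $\mathcal{L}_{\s{y}_0} < \mathcal{L}_{\s{y}}$, so the largest letter strictly decreases, whereas you induct on $|\s{x}|+|\s{y}|$; both are well-founded.
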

\begin{proof}
The argument is analogous to that given for Lemma~\ref{lem-unexpected1}. Note that the recursive case $\lambda \s{x_0}, \lambda \s{y_0}$ is likewise based on the ordering of $\s{x_0}, \s{y_0}$,  where $\mathcal{L}_{\s{x_0}} < \mathcal{L}_{\s{x}}$
and $\mathcal{L}_{\s{y_0}} < \mathcal{L}_{\s{y}}$. \qed
\end{proof}

Interestingly, although Lemma \ref{lem-unexpected2} holds for lexorder, Lemma
\ref{lem-unexpected1} does not as shown by: $a < ab$ in lexorder but $ac \nless
abc$.

We can now combine the above lemmas into a more general result:

\begin{theorem}
\label{thrm-presuf}
For any strings \s{u}, \s{v}, \s{x}, \s{y},
$\s{x} \prec \s{y} \Leftrightarrow \s{uxv} \prec \s{uyv}$.
\end{theorem}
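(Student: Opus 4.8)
The plan is to bootstrap the two single-letter congruences of Lemmas~\ref{lem-unexpected1} and~\ref{lem-unexpected2} up to arbitrary context strings $\s{u},\s{v}$ by peeling off one character at a time. Since Theorem~\ref{thrm-presuf} is a biconditional and both lemmas are themselves biconditionals, every peeling step preserves logical equivalence in both directions, so no separate converse argument will be needed.

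First I would establish the one-sided (suffix) statement $\s{x}\prec\s{y} \Leftrightarrow \s{xv}\prec\s{yv}$ by induction on $|\s{v}|$. The base case $\s{v}=\s{\varepsilon}$ is trivial. For the inductive step, write $\s{v}=\s{v'}\lambda$ with $\lambda\in\Sigma$ its last letter. Applying Lemma~\ref{lem-unexpected1} to the strings $\s{xv'}$ and $\s{yv'}$ gives $\s{xv'}\lambda \prec \s{yv'}\lambda \Leftrightarrow \s{xv'}\prec\s{yv'}$, and the induction hypothesis rewrites the right-hand side as $\s{x}\prec\s{y}$; chaining the two equivalences closes the step. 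Symmetrically, I would prove the prefix statement $\s{x}\prec\s{y}\Leftrightarrow \s{ux}\prec\s{uy}$ by induction on $|\s{u}|$, now writing $\s{u}=\lambda\s{u'}$ and invoking Lemma~\ref{lem-unexpected2} on $\s{u'x},\s{u'y}$ instead.

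Finally I would compose the two one-sided results. The suffix statement gives $\s{x}\prec\s{y}\Leftrightarrow\s{xv}\prec\s{yv}$; applying the prefix statement with $\s{xv}$ and $\s{yv}$ in the roles of $\s{x},\s{y}$ then yields $\s{xv}\prec\s{yv}\Leftrightarrow\s{uxv}\prec\s{uyv}$, where I use associativity of concatenation to identify $\s{u}$ followed by $\s{xv}$ with $\s{uxv}$ (and likewise on the $\s{y}$ side). Transitivity of $\Leftrightarrow$ then delivers the theorem. I do not anticipate a genuine obstacle here: all the combinatorial difficulty---in particular the case analysis on $\mathcal{L}$, $\mathcal{C}$ and the recursive descent into the blocks $\s{x}_h,\s{y}_h$---was already discharged inside Lemma~\ref{lem-unexpected1}, which is the nontrivial ingredient (its right-append form fails for lexorder, in contrast to the prepend form of Lemma~\ref{lem-unexpected2}). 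The only point demanding a little care is the bookkeeping when a lemma is reapplied to already-extended words, ensuring that each single-letter peel is matched on both sides so that the equivalences compose cleanly.
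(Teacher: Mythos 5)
Your proposal is correct and matches the paper's own argument, which likewise derives Theorem~\ref{thrm-presuf} by ``repeated applications'' of Lemmas~\ref{lem-unexpected1} and~\ref{lem-unexpected2}, appending letters of $\s{v}$ one at a time and prepending letters of $\s{u}$ one at a time; you have merely made the implicit induction and the composition of biconditionals explicit, which the paper leaves informal. No gap: since both lemmas are equivalences, each peeling step preserves the biconditional exactly as you describe.
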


\begin{proof}
This follows from repeated applications of Lemmas \ref{lem-unexpected1} \& \ref{lem-unexpected2},
where we append one letter at a time to suffixes
and prepend one letter at a time to prefixes.
\qed
\end{proof}
 
We can establish extensions and applications of these results:

\begin{lemma}\label{lem-unexpected5}\label{lem-addLambda}
Let \s{x} and \s{y} be strings with $V$-forms
\begin{eqnarray}
\s{x} & = & \s{x}_0  \EL_{\s{x}} \s{x}_1  \EL_{\s{x}} \s{x}_2 \cdots
\s{x}_{j-1} \mathcal \EL_{\s{x}} \s{x}_{j}, \nonumber \\
\s{y} & = & \s{y}_0 \EL_{\s{y}} \s{y}_1  \EL_{\s{y}} \s{y}_2 \cdots
\s{y}_{k-1} \EL_{\s{y}} \s{y}_{k}. \nonumber
\end{eqnarray}
For any letter $\lambda \le \max(\EL_{\s{x}},\EL_{\s{y}})$
and any integer $i \in \{0 \ldots \max(j,k)\}$, let
\begin{eqnarray}
\s{x'}  = &  \s{x}_0  \EL_{\s{x}} \cdots \EL_{\s{x}}
\s{x}_i \lambda \EL_{\s{x}} \cdots  \EL_{\s{x}} \s{x}_{j}
\nonumber,\\
\s{y'}  = &  \s{y}_0  \EL_{\s{y}} \cdots \EL_{\s{y}}
\s{y}_i \lambda \EL_{\s{y}} \cdots  \EL_{\s{y}} \s{y}_{k}
\nonumber,\\
\s{x''}  = & \s{x}_0  \EL_{\s{x}} \cdots \EL_{\s{x}} \lambda
\s{x}_i \EL_{\s{x}} \cdots  \EL_{\s{x}} \s{x}_{j} \nonumber,\\
\s{y''} = & \s{y}_0  \EL_{\s{y}}
\cdots \EL_{\s{y}} \lambda\s{y}_i \EL_{\s{y}} \cdots  \EL_{\s{y}}
\s{y}_{k} \nonumber.
\end{eqnarray}
Then $\s{x'} \prec \s{y'} \Leftrightarrow \s{x} \prec \s{y} \Leftrightarrow \s{x''} \prec \s{y''}$.
\end{lemma}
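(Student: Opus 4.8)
The plan is to lean on two facts already in hand: the three-way classification of Lemma~\ref{lem-a} and the single-letter append/prepend invariances of Lemmas~\ref{lem-unexpected1} and~\ref{lem-unexpected2}. Since $\prec$ is a total order, it suffices to prove the forward implication $\s{x} \prec \s{y} \Rightarrow \s{x'} \prec \s{y'}$ together with the trivial observation that $\s{x} = \s{y}$ forces $\s{x'} = \s{y'}$; the reverse implication then follows by applying the forward one with $\s{x}$ and $\s{y}$ interchanged and invoking trichotomy. The equivalence $\s{x} \prec \s{y} \Leftrightarrow \s{x''} \prec \s{y''}$ is proved by the identical argument, with Lemma~\ref{lem-unexpected2} replacing Lemma~\ref{lem-unexpected1} at the one place where $\lambda$ is adjoined to a single block; so I would write out the append case and simply record that the prepend case is symmetric. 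The whole proof is then a case split on which of (C1), (C2), (C3) certifies $\s{x} \prec \s{y}$, tracking how inserting $\lambda$ into block $i$ moves the two controlling quantities $\EL$ and $\EC$; the hypothesis $\lambda \le \max(\EL_{\s{x}},\EL_{\s{y}})$ is exactly what prevents $\lambda$ from becoming a spurious new largest letter that would restructure both $V$-forms and destroy the correspondence.

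In the (C1) branch, take $\EL_{\s{x}} < \EL_{\s{y}}$, so $\lambda \le \EL_{\s{y}}$. If $\lambda < \EL_{\s{y}}$ then $\EL_{\s{x'}} = \max(\EL_{\s{x}},\lambda) < \EL_{\s{y}} = \EL_{\s{y'}}$ and (C1) still applies to $\s{x'},\s{y'}$. The delicate point is $\lambda = \EL_{\s{y}}$: now $\EL_{\s{x'}} = \EL_{\s{y'}} = \EL_{\s{y}}$, so the certificate must switch from (C1) to (C2); here I would use that $\s{x}$ contains no copy of $\EL_{\s{y}}$ (as $\EL_{\s{y}} > \EL_{\s{x}}$), whence $\EC_{\s{x'}} = 1 < \EC_{\s{y}} + 1 = \EC_{\s{y'}}$ and (C2) gives $\s{x'} \prec \s{y'}$. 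In the (C2) branch, $\EL_{\s{x}} = \EL_{\s{y}}$ and $\EC_{\s{x}} < \EC_{\s{y}}$, with $\lambda \le \EL_{\s{x}}$; inserting $\lambda$ adds $0$ (if $\lambda < \EL_{\s{x}}$) or $1$ (if $\lambda = \EL_{\s{x}}$) to both counts, preserving $\EC_{\s{x'}} < \EC_{\s{y'}}$, so (C2) continues to certify $\s{x'} \prec \s{y'}$.

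The substantive branch is (C3), where $\EL_{\s{x}} = \EL_{\s{y}} = \EL$, $\EC_{\s{x}} = \EC_{\s{y}}$ (so the two $V$-forms have equally many blocks), and the verdict is fixed by the least index $h$ with $\s{x}_h \ne \s{y}_h$, via $\s{x}_h \prec \s{y}_h$. When $\lambda = \EL$, the insertion splices one empty block into each of $\s{x}$ and $\s{y}$ at the same relative index and raises both counts by one; since the two new empty blocks occupy aligned positions and match, the least differing index is unchanged and the reduction $\s{x}_h \prec \s{y}_h$ is untouched. When $\lambda < \EL$, the counts and block totals are unchanged and only block $i$ is altered. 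If $h \ne i$ the verdict is unaffected: for $h < i$ the first disagreement already occurs before block $i$, while for $h > i$ we have $\s{x}_i = \s{y}_i$, hence $\s{x}_i\lambda = \s{y}_i\lambda$, so the blocks still first disagree at $h$. If $h = i$, Lemma~\ref{lem-a} reduces the comparison of $\s{x'},\s{y'}$ to that of $\s{x}_i\lambda$ and $\s{y}_i\lambda$, and Lemma~\ref{lem-unexpected1} converts this to the comparison of $\s{x}_i$ and $\s{y}_i$, which is precisely the original verdict.

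I expect the main obstacle to be the bookkeeping in the two borderline situations where $\lambda$ meets the largest letter: the promotion $\lambda = \EL_{\s{y}}$ in (C1), where the certifying condition jumps from (C1) to (C2), and the empty-block insertion $\lambda = \EL$ in (C3), where one must check that the spliced empty blocks stay aligned in the two strings so that the least differing index $h$ — and with it the recursive comparison — is genuinely preserved. Everything else is a direct reading of Lemma~\ref{lem-a} against the controlled change in $\EL$ and $\EC$.
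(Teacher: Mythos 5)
Your proof is correct, and its core is the same as the paper's: a case split on which of (C1)--(C3) certifies $\s{x} \prec \s{y}$, tracking how the insertion moves $\EL$ and $\EC$, with the $h=i$ sub-case of (C3) reduced to comparing $\s{x}_h\lambda$ with $\s{y}_h\lambda$ and discharged by the append-invariance (the paper cites Theorem~\ref{thrm-presuf}, you cite Lemma~\ref{lem-unexpected1} directly --- equivalent). Where you genuinely diverge is in the architecture: the paper proves \emph{both} implications explicitly, with a full second case analysis for $\s{x'} \prec \s{y'} \Rightarrow \s{x} \prec \s{y}$ (including its own sub-cases, e.g.\ $\EL_{\s{x'}}=\EL_{\s{y'}}$ with $\EC_{\s{x'}}<\EC_{\s{y'}}$ splitting on whether $\lambda = \EL_{\s{y}}$), whereas you obtain the converse for free from totality: the insertion is defined symmetrically on the pair $(i,\lambda)$, and $\s{x}=\s{y}$ forces $\s{x'}=\s{y'}$, so trichotomy turns one implication into the equivalence. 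That halves the case analysis and is a clean improvement. A second difference runs the other way in your favour as well: in the (C3) branch you explicitly separate $\lambda = \EL$ (an empty block spliced at aligned index $i+1$ in both strings, counts raised by one) from $\lambda < \EL$, and verify the verdict survives; the paper's proof silently treats $h=i$ via ``the ordering of $\s{x}_h\lambda, \s{y}_h\lambda$,'' which is literally accurate only when $\lambda < \EL$, so your bookkeeping closes a gloss in the published argument. One trivial wording slip: when $\lambda = \EL$ and $h > i$ the least differing index shifts from $h$ to $h+1$ rather than being ``unchanged,'' but the blocks compared there are the same strings $\s{x}_h, \s{y}_h$, so nothing breaks. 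Your deferral of the $\s{x''}$ claim to the symmetric argument with Lemma~\ref{lem-unexpected2} matches the paper, which likewise says that proof ``is similar.''
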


\begin{proof}
First suppose that $\s{x'} \prec \s{y'}$, so that one of the conditions
(C1)-(C3) of Lemma~\ref{lem-a} must hold:
\begin{itemize}
\item[$\bullet$]
Assume that $\EL_{\s{x'}} < \EL_{\s{y'}}$.
Then $\lambda < \EL_{\s{y}}$ and
$\EL_{\s{x}} \le \EL_{\s{x'}} < \EL_{\s{y'}} = \EL_{\s{y}}$,
so that $\s{x} \prec \s{y}$ by (C1).
\item[$\bullet$]
Assume that $\EL_{\s{x'}} = \EL_{\s{y'}}$, with $\EC_{\s{x'}} < \EC_{\s{y'}}$.
If $\lambda = \EL_{\s{y}}$,
then either $\EL_{\s{x}} < \EL_{\s{y}}$ or
$\lambda = \EL_{\s{x}}$ and
$\EC_{\s{x}} = \EC_{\s{x'}} - 1 < \EC_{\s{y'}} - 1 = \EC_{\s{y}}$;
otherwise,
$\lambda < \EL_{\s{y}}$, so that $\EL_{\s{x}} = \EL_{\s{y}}$
with $\EC_{\s{x}} = \EC_{\s{x'}} < \EC_{\s{y'}} = \EC_{\s{y}}$.
In all three cases, $\s{x} \prec \s{y}$ by (C2).
\item[$\bullet$]
If $\EL_{\s{x'}} = \EL_{\s{y'}}$ and $\mathcal{C}_{\s{x'}} =
\mathcal{C}_{\s{y'}}$, then whether or not
$\s{x} \prec \s{y}$ depends on the least value $h$ of Lemma 1
such that $\s{x}_h \prec \s{y}_h$:
\begin{itemize}
\item [$\circ$]
if $h\not=i$, then the ordering of $\s{x},\s{y}$
corresponds to the ordering of $\s{x'},\s{y'}$,
unaffected by removing $\lambda$;
\item [$\circ$]
if $h = i$, then the ordering of $\s{x},\s{y}$
reduces to the ordering of $\s{x_h}\lambda,\s{y_h}\lambda$, so that
$\s{x} \prec \s{y}$ by Theorem 1.
\end{itemize}
\end{itemize}
Next suppose that $\s{x} \prec \s{y}$.
Again we consider the conditions (C1)-(C3) of Lemma~\ref{lem-a}:
\begin{itemize}
\item[$\bullet$]
Assume that $\EL_{\s{x}} < \EL_{\s{y}}$.
If $\lambda = \EL_{\s{y}}$, then $\lambda = \EL_{\s{x'}} = \EL_{\s{y'}}$
with $\EC_{\s{x'}} = 1 < \EC_{\s{y'}}$,
so that $\s{x'} \prec \s{y'}$ by (C2);
while if
$\lambda < \EL_{\s{y}}$, then $\s{x'} \prec \s{y'}$ by (C1),
because $ \EL_{\s{x}} \le \EL_{\s{x'}} < \EL_{\s{y}} = \EL_{\s{y'}}$.
\item[$\bullet$]
Assume that $\EL_{\s{x}} = \EL_{\s{y}}$, with $\EC_{\s{x}} < \EC_{\s{y}}$.
If $\lambda = \EL_{\s{x}} = \EL_{\s{y}}$, then $\mathcal{C}_{\s{x'}} =
 \mathcal{C}_{\s{x}}\+ 1 < \mathcal{C}_{\s{y}}\+ 1 = \mathcal{C}_{\s{y'}}$;
if $\lambda < \EL_{\s{x}} = \EL_{\s{y}}$,
then $\mathcal{C}_{\s{x'}} = \mathcal{C}_{\s{x}} < \mathcal{C}_{\s{y}} =
\mathcal{C}_{\s{y'}}$.
In both cases,
 $\s{x'} \prec \s{y'}$ by (C2).
\item[$\bullet$]
If $\EL_{\s{x}} = \EL_{\s{y}}$ and $\mathcal{C}_{\s{x}} =
\mathcal{C}_{\s{y}}$, then again whether or not $\s{x'} \prec \s{y'}$ depends on
the least value $h$ of Lemma~\ref{lem-a} such that $\s{x}_h \prec \s{y}_h$:
\begin{itemize}
\item[$\circ$]
if $h\not=i$, then the ordering of $\s{x'},\s{y'}$
corresponds to the ordering of $\s{x},\s{y}$,
unaffected by adding $\lambda$;
\item[$\circ$]
if $h = i$, then the ordering of $\s{x'},\s{y'}$
reduces to the ordering of $\s{x_h}\lambda,\s{y_h}\lambda$, so that
$\s{x'} \prec \s{y'}$ by Theorem \ref{thrm-presuf}.
\end{itemize}
\end{itemize}
This completes the proof that $\s{x'} \prec \s{y'} \Leftrightarrow \s{x} \prec \s{y}$.
The proof that $\s{x''} \prec \s{y''} \Leftrightarrow \s{x} \prec \s{y}$ is similar.
\qed
\end{proof}

To see that Lemma~\ref{lem-unexpected5}
does not hold for $\lambda > \max(\EL_{\s{x}},\EL_{\s{y}})$, consider
$$\s{x} = 1323 \prec \s{y} = 3133,\ \lambda = 4,\ \mbox{but } \s{y'} = 43133 \prec \s{x'} = 14323.$$
\begin{remark}
\label{rem-u}
Lemma~\ref{lem-unexpected5} is easily generalized by replacing $\lambda$
by any string $\s{u} = u_1u_2\cdots u_m$ such that,
for $1 \le j \le m$, $u_m \le \max(\EL_{\s{x}},\EL_{\s{y}})$,
and inserting such a \s{u} at any or all positions
$i \in \{0 \ldots \max(j,k)\}$.
\end{remark}

\begin{lemma}\label{lem-unexpected2-extended}
For any two strings $\s{x}, \s{y}$ and letters $\lambda, \mu \in \Sigma$, $\lambda \le \mu$:
\begin{description}
\item [(i)] $\s{x} \prec \s{y} \Rightarrow \lambda \s{x} \prec \mu \s{y}$;
\item [(ii)] $\s{x} \prec \s{y} \Rightarrow \s{x} \lambda \prec \s{y} \mu$.
\end{description}
\end{lemma}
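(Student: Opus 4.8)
The plan is to prove each of the two implications by inserting an intermediate string and then invoking transitivity of $\prec$, which is available since $\prec$ is a total order by Definition~\ref{def-V_order}. The guiding observation is that the ``mixed-letter'' step, in which $\lambda$ and $\mu$ genuinely differ, is itself just an instance of Theorem~\ref{thrm-presuf} applied to the one-character strings $\lambda$ and $\mu$; this lets me avoid any fresh induction or direct manipulation of $V$-forms.

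For part (ii) I would first record that when $\lambda < \mu$ as letters of $\Sigma$ we also have $\lambda \prec \mu$ as one-character strings: each singleton has $V$-form with $\EL$ equal to the letter itself, so condition (C1) of Lemma~\ref{lem-a} applies immediately. Taking $\s{u} = \s{y}$ and $\s{v} = \varepsilon$ in Theorem~\ref{thrm-presuf} then yields $\s{y}\lambda \prec \s{y}\mu$. Meanwhile Lemma~\ref{lem-unexpected1} converts the hypothesis $\s{x} \prec \s{y}$ into $\s{x}\lambda \prec \s{y}\lambda$. Chaining these as $\s{x}\lambda \prec \s{y}\lambda \prec \s{y}\mu$ gives $\s{x}\lambda \prec \s{y}\mu$. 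The degenerate case $\lambda = \mu$ needs no second step, since then $\s{y}\lambda = \s{y}\mu$ and the single application of Lemma~\ref{lem-unexpected1} already is the claim.

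Part (i) is entirely symmetric: Lemma~\ref{lem-unexpected2} supplies $\lambda\s{x} \prec \lambda\s{y}$, while Theorem~\ref{thrm-presuf} with $\s{u} = \varepsilon$ and $\s{v} = \s{y}$ turns $\lambda \prec \mu$ into $\lambda\s{y} \prec \mu\s{y}$; transitivity then delivers $\lambda\s{x} \prec \mu\s{y}$, again with the case $\lambda = \mu$ covered by the first step alone.

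I do not anticipate a serious obstacle, since the heavy lifting was done in Lemmas~\ref{lem-unexpected1} and \ref{lem-unexpected2} and in Theorem~\ref{thrm-presuf}. The only points requiring care are (a) recognising that a strict inequality between alphabet letters lifts to $\prec$ between the corresponding singleton strings via (C1), and (b) keeping the chain correctly oriented so that the intermediate string $\s{y}\lambda$ (respectively $\lambda\s{y}$) really sits between the two sides. One could instead establish the mixed-letter step from scratch by a short case analysis on $\EL$ and $\EC$ following Lemma~\ref{lem-a}, but routing it through Theorem~\ref{thrm-presuf} is cleaner and keeps the argument inductionn-free.
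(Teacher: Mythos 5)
Your proof is correct and takes essentially the same route as the paper's: both dispose of $\lambda = \mu$ by reducing to Lemmas~\ref{lem-unexpected1} and \ref{lem-unexpected2}, and for $\lambda < \mu$ both chain $\lambda\s{x} \prec \lambda\s{y} \prec \mu\s{y}$ (respectively $\s{x}\lambda \prec \s{y}\lambda \prec \s{y}\mu$) by applying Theorem~\ref{thrm-presuf} with empty context to the singletons $\lambda, \mu$ and concluding by transitivity. Your only addition is the explicit observation, via condition (C1) of Lemma~\ref{lem-a}, that $\lambda < \mu$ in $\Sigma$ lifts to $\lambda \prec \mu$ between one-letter strings --- a step the paper leaves implicit.
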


\begin{proof}
For $\lambda = \mu$, (i) reduces to Lemma~\ref{lem-unexpected2},
while (ii) reduces to Lemma~\ref{lem-unexpected1}.
Thus we may assume $\lambda < \mu$.

Suppose $\s{x} \prec \s{y}$.
Then by Lemma~\ref{lem-unexpected2}
$\lambda\s{x} \prec \lambda\s{y}$,
while by Theorem~\ref{thrm-presuf} with $\s{u} = \s{\varepsilon}$,
$\lambda\s{y} \prec \mu\s{y}$.
Therefore $\lambda\s{x} \prec \mu\s{y}$, proving (i).
The proof of (ii) is similar.  \qed
\end{proof}
 
The following examples show that sufficiency does not hold in
Lemma~\ref{lem-unexpected2-extended}:
\begin{description}
\item[(i)]
$\s{y} = 441 \prec \s{x} = 442$,
$\lambda = 2 < \mu = 3$, but
$\lambda\s{x} = 2442 \prec \mu\s{y} = 3441$;
\item[(ii)]
$\s{y} = 441 \prec \s{x} = 442$,
$\lambda = 2 < \mu = 3$, but
$\s{x}\lambda = 4422 \prec \s{y}\mu = 4413$.
\end{description}

\section{Applications}\label{sect-appls}

Some of the results presented above lead us to some interesting applications. In
this section, we first present a brief sketch of an idea for a new string
comparison algorithm in $V$-order and then proceed to consider applications of
our results to suffix arrays (SAs) and the Burrows Wheeler transform (BWT).

\subsection{$V$-Order String Comparison}\label{sec:newAlgo}
Recently, Alatabbi et al. presented an interesting $V$-order string comparison
algorithm in \cite{ADRS14,ADRS14FI} (referred to as the ADRS algorithm
henceforth), where a mapping of the position of each letter in the string is
exploited to check for the conditions stated in Lemma \ref{lem-a}. Note that
there are three conditions in Lemma \ref{lem-a} and things get most interesting
when we reach Condition (C3) because of its recursive nature. Now, the
efficiency of ADRS algorithm depends on a key result (cf. Corollary 2.9 of
\cite{ADRS14FI}) which proves that the mismatch position of the two strings
under comparison remains the same as we go deep into the recursion. This fact
along with the result presented in Lemma \ref{lem-addLambda} gives us yet
another idea for an efficient string comparison algorithm in $V$-order.
Essentially, the idea builds upon the idea of the map in the ADRS algorithm as
we will now outline.

Suppose we are given two strings, \s{x} and \s{y}, with $V$-forms
\begin{eqnarray}
\s{x} & = & \s{x}_0  \EL_{\s{x}} \s{x}_1  \EL_{\s{x}} \s{x}_2 \cdots
\s{x}_{j-1} \mathcal \EL_{\s{x}} \s{x}_{j}, \nonumber \\
\s{y} & = & \s{y}_0 \EL_{\s{y}} \s{y}_1  \EL_{\s{y}} \s{y}_2 \cdots
\s{y}_{k-1} \EL_{\s{y}} \s{y}_{k}. \nonumber
\end{eqnarray}

\begin{description}
\item[Step 1:]
We first scan the input strings from left to right to identify
$\EL_{\s{x}}$ and $\EL_{\s{y}}$ and compute $\mathcal{C}_{\s{x}}$ and
$\mathcal{C}_{\s{y}}$. At this point, if we can determine the order using
conditions (C1) and/or (C2) of Lemma \ref{lem-a}, then we terminate immediately
returning the order.

\item[Step 2:] We compute the first mismatch position, $h$, between
\s{x} and \s{y}; that is, for $1 \leq i < h$, we have $\s{x_i} = \s{y_i}$ and
$\s{x_h} \neq \s{y_h}$. Now, by applying
Lemma \ref{lem-addLambda},  
we can ignore the letters to its left, because they are equal in $\s{x}$
and $\s{y}$. Note that the case when $h$ lies within $\s{x}_{0} (\s{y}_{0})$ can
be handled easily.

\item[Step 3:] Assume that the nearest $\EL_{\s{x}} = \EL_{\s{y}}$ to the right
of $h$ is at position $\ell_x+1$ ($\ell_y+1$) in $\s{x}$ ($\s{y}$). The case
when $h$ lies within $\s{x}_{j} (\s{y}_{j})$ again can be handled easily.

\item[Step 4:] Now we focus on $\s{x'}=\s{x}_h..\s{x}_{\ell_x} $ and
$\s{y'} =\s{y}_h..\s{y}_{\ell_y}$. Essentially, we will construct a map as
is done in the ADRS algorithm. But we will not construct the map completely;
rather we will construct only the part of the map that is relevant to the
computation in a different way. To do this we count the number of occurrences of each
letter $\alpha \in \Sigma$ within an appropriate range as follows. We start
with the highest letter and continue downward. Assuming that $\sigma =
|\Sigma|$, we use two $\sigma$-length arrays $count_{\s{x}}[1..\sigma]$ and
$count_{\s{y}}[1..\sigma]$ as follows.
Suppose we are counting the number of $\alpha \in \Sigma$. Then we check
the leftmost occurrence $p$ of $\beta > \alpha$ in the range
$\s{x}[h..\ell_x]$ such that there is no occurrence of $\gamma > \beta$
before $p$. And we count the number of occurrences of $\alpha$ in the
range $\s{x}[h..p-1]$ and store it in $count_{\s{x}}[\alpha]$. Similarly
we compute $count_{\s{y}}[\alpha]$.

\item[Step 5:] At this point, in $count_{\s{x}}[1..\sigma]\ (count_{\s{y}}[1..\sigma])$
we have the frequency of each letter $\alpha\in \Sigma$ in the appropriate
range. Now the rest is quite easy. We scan $count_{\s{x}}$, $count_{\s{y}}$ from
the higher to lower letters of $\Sigma$ as follows:
\begin{algorithmic}
\For{$\alpha = highest(\Sigma)~ \mathrm{to}~ lowest(\Sigma)$}
    \If {$count_{\s{x}}[\alpha] == count_{\s{y}}[\alpha]$}
    \LineComment{This means either $\alpha$ is nonexistent (when count is
    zero) or we are in Condition (C3). So we need to check the next letter.}
    \State \bcontinue
	\Else
    \LineComment{If $count_{\s{x}}[\alpha] \not= count_{\s{y}}[\alpha]$,
    then either $\alpha$ is nonexistent in \s{x} --- when
	$count_{\s{x}}[\alpha]$ is zero  --- or in \s{y} --- when
	$count_{\s{y}}[\alpha]$ is zero. That is, we are in Condition (C1) or
	(C2). So we have $count_{\s{x}}[\alpha] <
	count_{\s{y}}[\alpha]$ ($count_{\s{y}}[\alpha] < count_{\s{x}}[\alpha]$,
	respectively).}
	
	\State \Return $\s{x} \prec \s{y}$ ($\s{y} \prec \s{x}$, respectively)
    \EndIf

\EndFor
\end{algorithmic}
\end{description}

At this point a brief discussion is in order. Recall that the ADRS algorithm
runs in $O(n + \sigma)$ time. Because $\sigma$ is $O(n)$, this running time is
optimal. Therefore, we cannot get improvement asymptotically and the theoretical
time complexity of the new algorithm matches that of the ADRS algorithm.
However, the use of Lemma \ref{lem-addLambda} gives us an opportunity to work
much less from a practical point of view, especially for favourable input
strings. And this is why, despite the same theoretical time complexity, our new 
algorithm is an \itbf{input sensitive} algorithm and in practice should perform
better than the ADRS algorithm. 

\subsection{Suffix sorting and Burrows Wheeler transformation}\label{sec:ss-bwt}
 
The suffix permutation \cite{DuvalL02} of a word $\s{w} = w_1 w_2 \ldots w_n$ is
the permutation $\pi_{\s{w}}$ over $\{1, \ldots ,n\}$, where $\pi_{w_i}$ is the
rank of the suffix $\s{w}[i,n]$ in the set of the lexicographically sorted
suffixes of $\s{w}$.
In \cite{HR03} it is shown how to deduce the Lyndon factorization (Theorem
\ref{lLyndon-thrm}) of a text from its suffix permutation; conversely, a
strategy is given in \cite{MRRS14} for obtaining the suffix array from the
Lyndon factorization of a text.

We will outline how our new results from Section \ref{sect-results} can be
applied to obtaining a lex-extension suffix array from the $V$-order
factorization of a text -- the distinctness of factors in a Lyndon versus
$V$-order factorization of a given string \cite{DDS11, DDS13} opens more avenues
for string processing (such as choosing the factorization with more/less factors
for efficiency).

To elaborate, there are three main cases to be handled for the
$V$-factorization algorithm VF in \cite{DDS11, DDS13} as follows. To
determine the $V$-order factorization $\s{x_1} \ge \s{x_2} \cdots \ge
\s{x_k}$ of a string $\s{x}$, algorithm VF applies Lemma 3.16 in
\cite{DDS13} to substrings \s{x_i}, \s{x_j}:

\begin{itemize}
\item If (C1) holds for \s{x_i}, \s{x_j} ($\mathcal{L_{\s{x_i}}} <
\mathcal{L_{\s{x_j}}}$) then $\s{x_i} > \s{x_j}$ in the factorization - the
algorithm tracks maximal elements.

\item  If (C2) holds for \s{x_i}, \s{x_j} then, $\s{x_i} < \s{x_j}$ if $\s{x_i}
\s{x_j}$ is a Hybrid Lyndon (that is a Lyndon word under lex-extension
\cite{DDS13}), and $\s{x_i} \s{x_j}$ is a factor in the factorization --  the
algorithm checks for concatenating repetitions.

\item  If (C3) holds for \s{x_i}, \s{x_j}, and if $\s{x_i} \prec \s{x_j}$ then
$\s{x_i} \s{x_j}$ is a factor in the factorization --  the algorithm compares
substrings between maximal elements.

\end{itemize}

As each factor is identified by algorithm VF, its rightmost position is recorded
(procedure output) and then all housekeeping variables are re-initialized
(procedure RESET) -- this essentially converts the remaining suffix of the
string into a new string to be factored with no re-visiting of the previously
factored elements required. Hence, similarly to Duval's Lyndon decomposition
algorithm \cite{Duval83}, the linear $V$-order factoring technique can be used
for on-line scenarios which is the setting of our applications.

Now, we are interested in the notion of compatibility for sorting suffixes as
introduced in \cite{MRRS14}. Let $\s{x}$ be a word and \s{u} be a substring
(factor) of $\s{x}$. The sorting of suffixes $\s{s_1}$, $\s{s_2}$ of \s{u}, with
respect to \s{u}, is {\it compatible} with the sorting of the suffixes of
$\s{x}$ for which $\s{s_1}$, $\s{s_2}$ are prefixes, with respect to $\s{x}$, if
they have the same order in both \s{u} and \s{x}.
It is shown in \cite{MRRS14} that, although compatibility doesn't always hold
for lexorder suffix-sorting, when \s{u} is chosen to be a substring of Lyndon
factors in a factorization then it does hold. In contrast, compatibility always
holds for sorting suffixes in $V$-order, and furthermore, the shorter suffix is
always lesser:

\begin{lemma}
\label{lem-compatible}
Let $\s{x} \in \Sigma^+$  and \s{u} be a substring of $\s{x}$ with $\s{s_1}$ a
suffix of $\s{u}$. If $\s{s_2}$ is a suffix of $\s{s_1}$ then $\s{s_2} \prec
\s{s_1}$ with respect to both \s{u} and \s{x}.
\end{lemma}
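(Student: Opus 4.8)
The plan is to reduce the statement entirely to two results already established: Lemma~\ref{lem-subsequce} for the $V$-order comparison itself, and Theorem~\ref{thrm-presuf} for the claim that this comparison is insensitive to the surrounding context (which is what ``compatible'' amounts to here). Since the conclusion is the \emph{strict} relation $\s{s_2} \prec \s{s_1}$, the hypothesis must be read as $\s{s_2}$ being a proper suffix of $\s{s_1}$; I would make this explicit at the outset by writing $\s{s_1} = \s{w}\s{s_2}$ with $\s{w} \neq \varepsilon$.

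First I would treat the ordering ``with respect to $\s{u}$''. Comparing two suffixes of $\s{u}$ in the suffix order of $\s{u}$ is nothing other than comparing those two strings in $V$-order, so it suffices to show $\s{s_2} \prec \s{s_1}$ as strings. But a proper suffix is in particular a proper subsequence of $\s{s_1}$, being obtained by deleting the nonempty prefix $\s{w}$; hence Lemma~\ref{lem-subsequce} applies directly and yields $\s{s_2} \prec \s{s_1}$. This simultaneously establishes the ``shorter suffix is lesser'' remark preceding the lemma.

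Next I would handle ``with respect to $\s{x}$''. Write $\s{x} = \s{p}\,\s{u}\,\s{q}$ for the (possibly empty) strings $\s{p},\s{q}$ bracketing the chosen occurrence of $\s{u}$. Since $\s{s_1}$ (and hence $\s{s_2}$) is a suffix of $\s{u}$, the suffixes of $\s{x}$ for which $\s{s_1},\s{s_2}$ are prefixes are exactly $\s{s_1}\s{q}$ and $\s{s_2}\s{q}$. Applying Theorem~\ref{thrm-presuf} with empty prefix and suffix $\s{q}$ gives $\s{s_2} \prec \s{s_1} \Leftrightarrow \s{s_2}\s{q} \prec \s{s_1}\s{q}$, so the order established in the previous paragraph carries over to the suffixes of $\s{x}$, which is precisely compatibility.

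There is no genuinely hard step; the only care needed is in the bookkeeping of what ``with respect to $\s{u}$'' versus ``with respect to $\s{x}$'' means, and in recognizing that appending the common tail $\s{q}$ is exactly the situation covered by Theorem~\ref{thrm-presuf}. The one subtlety worth flagging is the implicit properness of the suffix: were $\s{s_2} = \s{s_1}$, the conclusion $\s{s_2} \prec \s{s_1}$ would fail and the appeal to Lemma~\ref{lem-subsequce} would be invalid, so the statement is meaningful only for proper suffixes $\s{s_2}$.
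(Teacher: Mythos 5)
Your proof is correct, and its first half coincides with the paper's: both derive $\s{s_2} \prec \s{s_1}$ with respect to \s{u} from Lemma~\ref{lem-subsequce}, a proper suffix being in particular a proper subsequence. You diverge in the second half. The paper never invokes Theorem~\ref{thrm-presuf}; it writes the relevant suffixes of \s{x} as $\s{s_1 t_1}$ and $\s{s_2 t_2}$ and simply applies Lemma~\ref{lem-subsequce} a \emph{second} time, observing that this pair again stands in the proper-subsequence relation (indeed $\s{s_2 t_2}$ is a proper suffix of $\s{s_1 t_1}$). You instead transfer the order already obtained by appending the common tail \s{q} via Theorem~\ref{thrm-presuf} with empty left context. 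Both routes are sound; the paper's is more economical, using one lemma twice by exploiting the special structure of this particular pair, whereas yours isolates the context-invariance mechanism that underlies compatibility in general and would work for any pair of strings whose order is already known, not only a suffix/super-suffix pair. Two incidental merits of your write-up: you pin down the tails as equal ($\s{t_1} = \s{t_2} = \s{q}$, since $\s{s_1}$ and $\s{s_2}$ both end where the chosen occurrence of \s{u} ends), which the paper's ``possibly empty $\s{t_1}$, $\s{t_2}$'' leaves vague; and you make explicit the properness hypothesis $\s{s_2} \neq \s{s_1}$ needed for the strict conclusion, which the paper leaves implicit.
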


\begin{proof}
Consider the suffixes $\s{s_1 t_1}$ and $\s{s_2 t_2}$ of $\s{x}$ for possibly
empty $\s{t_1}$, $\s{t_2}$.
Applying Lemma \ref{lem-subsequce} then both $\s{s_2} \prec \s{s_1}$ with
respect to \s{u} and $\s{s_2 t_2} \prec \s{s_1 t_1}$ with respect to \s{x}. 
\qed
\end{proof}

Lemma \ref{lem-subsequce} further shows that suffixes are totally $V$-ordered by
their given order: for any string $\s{x} = \s{x}[1 \ldots n]$, we have $x_n
\prec x_{n-1}x_n \prec \cdots \prec \s{x}$.

However, to address applications involving conjugates of strings, such as the
Burrows Wheeler transform, Lemma \ref{lem-compatible} doesn't suffice for
$V$-order: when using suffixes to sort all rotations of a string, since each
rotation has the same number of maximal elements, therefore implicitly condition
(C3) applies --- for ordering these suffixes we need the first distinct prefix
substrings of the $V$-forms of the suffixes. We will use {\it lex-extension
ordering} which compares factors in a factorization pair-wise from left to right
while each comparison is made in $V$-order.

\begin{theorem}
\label{cor-V_factorization+suffixes}
Let $\s{x} \in \Sigma^+$ with $V$-order factorization $\s{x} = \s{x_1} \cdots
\s{x_k}$, and let $\s{u} = \s{x_i} \cdots \s{x_j}$, for $1 \le i \le j \le k$.
Then the sorting of the suffixes of \s{u} is compatible with the sorting of the
suffixes of \s{x}.
\end{theorem}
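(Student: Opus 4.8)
The plan is to reduce compatibility to the order-preservation property already established in Theorem~\ref{thrm-presuf}. First I would make the embedding of \s{u} into \s{x} explicit. Since $\s{u} = \s{x_i}\cdots\s{x_j}$ is a contiguous block of the factorization, it occupies a contiguous range of positions in \s{x}, so we may write $\s{x} = \s{p}\s{u}\s{w}$ with $\s{p} = \s{x_1}\cdots\s{x}_{i-1}$ and common tail $\s{w} = \s{x}_{j+1}\cdots\s{x_k}$ (where $\s{w} = \s{\varepsilon}$ when $j=k$). The key structural observation is that every suffix \s{s} of \s{u} extends to exactly one suffix of \s{x} beginning at the same position, namely $\s{s}\s{w}$, and that the appended tail \s{w} is \emph{the same} for every suffix of \s{u}. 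Thus, unwinding the definition of compatibility, what must be shown is precisely that, for any two suffixes $\s{s_1},\s{s_2}$ of \s{u},
$$\s{s_1} \prec \s{s_2} \iff \s{s_1}\s{w} \prec \s{s_2}\s{w}.$$

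Second, I would close this by a direct appeal to Theorem~\ref{thrm-presuf}, instantiated with empty prefix $\s{u} = \s{\varepsilon}$ and common suffix $\s{v} = \s{w}$: this yields exactly the displayed equivalence, so the two sortings agree. The degenerate case $j=k$, where $\s{w}=\s{\varepsilon}$ and the suffixes of \s{u} are already suffixes of \s{x}, is then immediate. An equivalent and even shorter route is available: any two suffixes of a single string are nested, one being a suffix of the other, so Lemma~\ref{lem-compatible} already fixes their relative order --- the shorter is the smaller --- consistently with respect to both \s{u} and \s{x}, which is precisely compatibility.

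I expect no genuinely hard step here; the entire content lies in recognising that the suffixes of \s{u}, when read inside \s{x}, share a common tail, so that compatibility is nothing more than order-preservation under appending a fixed suffix. The point worth stressing --- and the reason this holds \emph{unconditionally} for $V$-order whereas lexorder requires the Lyndon-factor restriction of \cite{MRRS14} --- is that the engine, Theorem~\ref{thrm-presuf} (equivalently Lemma~\ref{lem-unexpected1}), has no lexicographic analogue, as witnessed by $a < ab$ yet $ac \nless abc$. The only items I would double-check are the boundary conditions (empty prefix \s{p} when $i=1$ and empty tail \s{w} when $j=k$) and the positional reading of ``the suffix of \s{x} for which $\s{s_1}$ is a prefix'' as the one beginning at the matching absolute position, all of which are routine.
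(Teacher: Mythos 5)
Your proof establishes a true fact, but not the fact this theorem is asserting: you have misread which ordering is being sorted under. The sentence immediately preceding the theorem says that for the intended application (sorting rotations for the BWT) ``Lemma \ref{lem-compatible} doesn't suffice for $V$-order,'' and that the suffixes are to be sorted in \emph{lex-extension} order --- each suffix viewed as its sequence of factors, compared pairwise left-to-right with each pairwise comparison made in $V$-order --- not in the plain order $\prec$. Both of your routes prove the plain-$\prec$ statement: your second, ``even shorter'' route literally \emph{is} Lemma \ref{lem-compatible}, which the paper has already stated and then explicitly declared insufficient; and your main route via Theorem \ref{thrm-presuf} proves the same thing. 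The warning sign you noticed but explained away is decisive: under your reading the hypothesis that $\s{u} = \s{x_i}\cdots\s{x_j}$ is a block of \emph{consecutive $V$-factors} plays no role at all --- any substring ending where $\s{x}$ ends its shared tail would do --- whereas in the actual theorem that alignment is exactly what carries the content, just as in Theorem 3.2 of \cite{MRRS14} for the Lyndon case, which the paper's proof invokes and adapts ``by replacing lexorder with lex-extension ordering.''

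Concretely, the step that fails under the correct reading is your appeal to Theorem \ref{thrm-presuf} to justify that appending the common tail $\s{w}$ preserves order. That theorem governs $\prec$ on whole strings; it says nothing about lex-extension order on factor sequences. If the factor sequence of $\s{s_1}$ is a proper prefix of that of $\s{s_2}$, then within $\s{u}$ the lex-extension comparison is settled by length, but in $\s{x}$ the comparison at the first position past the end of $\s{s_1}$'s sequence pits the first factor of $\s{w}$ against the corresponding factor of $\s{s_2}$, and the order can flip --- this is precisely the factor-level analogue of the $a < ab$ yet $ac \nless abc$ phenomenon that you cite at the letter level as the thing $V$-order avoids. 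Lex-extension order, being lexicographic over the alphabet of factors, does \emph{not} avoid it, and ruling the flip out is where the non-increasing Hybrid-Lyndon structure of the $V$-factorization and the factor-boundary alignment of $\s{u}$ are genuinely needed. So the paper's proof, terse as it is (a reduction to the Lyndon-factorization argument of \cite{MRRS14}), is not replaceable by your one-line instantiation of Theorem \ref{thrm-presuf}; what that instantiation yields is the already-recorded consequence of Lemmas \ref{lem-subsequce} and \ref{lem-compatible}.
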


\begin{proof}
The case of the Lyndon factorization is Theorem 3.2 in \cite{MRRS14}. The
$V$-order proof thus follows from the Lyndon-like properties of the $V$-order
factorization and by replacing lexorder with lex-extension ordering.
\qed
\end{proof}

Equipped with this theorem, the clever incremental suffix sorting \& BWT
strategy introduced in \cite{MRRS14} can be modified for $V$-order: 

\begin{description}\setlength\itemsep{0.3em} 
\item[Step 1:] Compute the $V$-order factorization of $\s{x} = \s{v_1} \cdots
\s{v_k}$ in linear time \cite{DDS11, DDS13}.
\item[Step 2:] Compute the lex-extension order suffix array of each of $\s{v_1}$
and $\s{v_2}$ in linear time \cite{DS14}.
\item[Step 3:] Obtain the BWT($\s{v_i}$) from each SA($\s{v_i}$): for a suffix
$\s{v_i}=\s{x}[h \ldots m]$ the BWT character is $\s{x}[h-1]$.
\item[Step 4:]  Merge the sorted suffixes in Step 2 using ADRS algorithm
\cite{ADRS14FI} to obtain the suffix array of $\s{v_1} \s{v_2}$. For the merge,
if $\s{v_j} \succ \s{v_k}$, then the chosen suffix for the new array is
$\s{v_k}$, otherwise it is $\s{v_j v_k}$.
\item[Step 5:] Obtain the BWT of the merged sorted suffixes in Step 4. If the
chosen suffix for the new array was $\s{v_k}$, then the BWT character is given
by BWT($\s{v_k}$); otherwise it is BWT($\s{v_j}$) since the prefix $\s{x}[1
\ldots h-1]$ in \s{x} is rotated as $\s{v_j v_k} \dots \s{x}[1 \ldots h-1]$.
\item[Step 6:]  Compute the lex-extension order suffix array of $\s{v_3}$ and
merge it with the suffix array of $\s{v_1} \s{v_2}$ from Step 4 and obtain the
BWT.
\item[Step 7:] Repeat until all the $V$-factors have been incrementally processed.
\end{description} 

Overall, for iterating over $k$ factors, the time complexity is $O(k^2 n)$, with
each iteration taking $O(k n)$. As expressed in \cite{MRRS14} for the Lyndon
case, this technique is suitable for integration with the on-line $V$-order
factoring algorithm: suffix sorting can proceed in tandem as soon as the first
$V$-factor is identified. Note that in Step 4 above, the new string comparison
algorithm presented in Section \ref{sec:newAlgo} can be applied when
input-sensitivity is relevant.

\section{Future Research}\label{sect-future}

We propose the following problem: Suppose that $\s{x}, \s{y} \in \Sigma^+$ with
$\s{x} \prec \s{y}$. Under what permutations $\pi$, that is, $\s{x} \rightarrow
\pi(\s{x})$ and $\s{y} \rightarrow \pi(\s{y})$  does $\pi(\s{x}) \prec
\pi(\s{y})$ hold? For instance, for integers, $21 \prec 12$ and no permutation
works; whereas interchanging the first and last letters does for $142 \prec
243$ since  $241 \prec 342$, which generalizes to requiring that the rightmost
substrings of their $V$-forms are in $V$-order.

We propose studying such permutations in the context of the gene team problem: 
to find a set of genes that appear in two or more species, possibly in a
different order, but within a given distance in each chromosome -- this has
impact in understanding genome evolution and function
\cite{10.1109/TCBB.2010.127}.  
  
\bibliographystyle{plain}
\bibliography{references}

\begin{thebibliography}{10}

\bibitem{ADRS14}
Ali Alatabbi, Jacqueline~W. Daykin, M.~Sohel Rahman, and William~F. Smyth.
\newblock Simple linear comparison of strings in {$V$}--order -- (extended
  abstract).
\newblock In {\em International Workshop on Algorithms \& Computation
  (WALCOM)}, volume 8344 of {\em Lecture Notes in Computer Science}, pages
  80--89. Springer, 2014.

\bibitem{ADRS14FI}
Ali Alatabbi, Jacqueline~W. Daykin, M.~Sohel Rahman, and William~F. Smyth.
\newblock Simple linear comparison of strings in {$V$}--order.
\newblock {\em Fundamenta Informaticae}, To Appear, 2015.

\bibitem{CFL58}
K.~T. Chen, R.~H. Fox, and R.~C. {L}yndon.
\newblock Free differential calculus, iv -- the quotient groups of the lower
  central series.
\newblock {\em Ann. Math.}, pages 68:81--95, 1958.

\bibitem{CR02}
Maxime Crochemore and Wojciech Rytter.
\newblock {\em Jewels of stringology}.
\newblock World Scientific, 2002.

\bibitem{DaD96}
T.-N. Danh and D.E. Daykin.
\newblock The structure of {$V$}--order for integer vectors.
\newblock {\em {Congr. Numer.}Ed. A.J.W. Hilton. Utilas Mat. Pub. Inc.,
  Winnipeg, Canada, 113 (1996)}, pages 43--53, 1996.

\bibitem{DDS13}
D.~E. Daykin, J.~W. Daykin, and W.~F. Smyth.
\newblock A linear partitioning algorithm for hybrid {L}yndons using
  {$V$}--order.
\newblock {\em Theoret. Comput. Sci.}, 483:149--161, 2013.

\bibitem{DDS11}
David~E. Daykin, Jacqueline~W. Daykin, and William~F. Smyth.
\newblock String comparison and {L}yndon--like factorization using {$V$}--order
  in linear time.
\newblock In {\em Symp. on Combinatorial Pattern Matching}, volume 6661, pages
  65--76, 2011.

\bibitem{DD03}
D.E. Daykin and J.W. Daykin.
\newblock {L}yndon--like and {$V$}--order factorizations of strings.
\newblock {\em J. Discrete Algorithms}, (1):357--365, 2003.

\bibitem{DS14}
J.~W. Daykin and W.~F. Smyth.
\newblock A bijective variant of the {B}urrows--{W}heeler transform using
  {$V$}--order.
\newblock {\em Theoret. Comput. Sci.}, 531:77–89, 2014.

\bibitem{Duval83}
Jean-Pierre Duval.
\newblock Factorizing words over an ordered alphabet.
\newblock {\em J. Algorithms}, 4(4):363--381, 1983.

\bibitem{DuvalL02}
Jean{-}Pierre Duval and Arnaud Lefebvre.
\newblock Words over an ordered alphabet and suffix permutations.
\newblock {\em RAIRO Theor. Inform. Appl.}, 36(3):249--259, 2002.

\bibitem{HR03}
Christophe Hohlweg and Christophe Reutenauer.
\newblock Lyndon words, permutations and trees.
\newblock {\em Theor. Comput. Sci.}, 307(1):173--178, 2003.

\bibitem{KA03}
P.~Ko and S.~Aluru.
\newblock Space efficient linear time construction of suffix arrays.
\newblock In Ricardo~A. Baeza-Yates, Edgar~Ch \'avez, and Maxime Crochemore,
  editors, {\em Symp. on Combinatorial Pattern Matching}, volume 2676 of {\em
  Lecture Notes in Computer Science}, pages 200--210. Springer, 2003.

\bibitem{L83}
M.~Lothaire.
\newblock {\em Combinatorics on Words}.
\newblock Reading, MA (1983); 2nd Edition, Cambridge University Press,
  Cambridge (1997). Addison--Wesley, 1983.

\bibitem{MRRS14}
Sabrina Mantaci, Antonio Restivo, Giovanna Rosone, and Marinella Sciortino.
\newblock Suffix array and {L}yndon factorization of a text.
\newblock {\em J. Discrete Algorithms}, 28:2--8, 2014.

\bibitem{10.1109/TCBB.2010.127}
Biing-Feng Wang and Chien-Hsin Lin.
\newblock Improved algorithms for finding gene teams and constructing gene team
  trees.
\newblock {\em IEEE/ACM Transactions on Computational Biology and
  Bioinformatics}, 8(5):1258--1272, 2011.

\end{thebibliography}
\end{document}